\newtheorem{thm}{Theorem}
\newtheorem{cor}[thm]{Corollary}
\theoremstyle{definition}
\newtheorem{cond}[thm]{Condition}
\theoremstyle{remark}
\newtheorem{rem}{Remark}
\numberwithin{equation}{section}
\numberwithin{thm}{section}
\newcommand{\set}[1]{\left\{#1\right\}}
\newcommand{\RR}{\mathbb{R}}
\newcommand{\Rplus}{\mathbb{R}_{\ge 0}}
\newcommand{\binv}{b_\textit{inv}}
\newcommand{\basymp}{b_\textit{asymp}}
\newcommand{\bnorm}{b_\textit{norm}}
\newcommand{\bynorm}{b_\textit{y-norm}}
\newcommand{\bfwnorm}{b_\textit{fw-norm}}
\begin{document}

\title[Erratum to: Yield curve shapes in affine models]{Erratum to: `Yield curve shapes and the asymptotic short rate distribution in affine one-factor models'}%
\author{Martin Keller-Ressel}%


\maketitle
\begin{abstract}
I would like to thank Ralf Korn for alerting me to an error in the original paper \cite{KR08}. The error concerns the threshold at which the yield curve in an affine short-rate model changes from normal (strictly increasing) to humped (endowed with a single maximum). In particular, it is not true that this threshold is the same for the forward curve and for the yield curve, as claimed in \cite{KR08}. Below, the correct mathematical expression for the threshold is given, supplemented with a self-contained and corrected proof.
\end{abstract}

\section{Setting}
In \cite{KR08} affine short rate models for bond pricing were considered, i.e. models where the risk-neutral short rate process $r = (r_t)_{t \geq 0}$ is given by an affine process in the sense of \cite{DFS03}. The process $r$ takes values in a state space $D$, which is either $\Rplus$ or $\RR$. In this setting, the price at time $t$ of a zero-coupon bond with time-to-maturity $x$, denoted by $P(t,t+x)$ is of the form
\[P(t,t+x) = \exp\left(A(x) + r_t B(x)\right),\]
where $A$ and $B$ satisfy the generalized Riccati differential equations
\begin{equation}\label{eq:riccati}
\begin{aligned}
\partial_x A(x) &= F(B(x)), &\qquad A(0) &= 0\\
\partial_x B(x) &= R(B(x))  -1, &\qquad B(0) &= 0.
\end{aligned}
\end{equation}
The functions $F$ and $R$ are of L\'evy-Khintchine-form and their parameterization is in one-to-one correspondence with the infinitesimal generator of $r$, cf. \cite[Sec.~2]{KR08}. Derived from the bond price, are the \textbf{yield curve}
\[Y(x,r_t) := -\frac{\log P(t,t+x)}{x} = -\frac{A(x)}{x} - r_t \frac{B(x)}{x}\]
and the \textbf{forward curve}
\[f(x,r_t) := -\partial_x \log P(t,t+x) = -A'(x) - r_t B'(x).\]
The first objective of \cite{KR08} was to derive the long-term yield and long-term forward rate. It was shown that the equation $R(c) = 1$ has at most a single negative solution $c$, and that under mild conditions
\[\basymp := \lim_{x \to \infty} Y(x,r_t) = \lim_{x \to \infty} f(x,r_t) = -F(c)\]
if such a solution exists, cf. \cite[Thm.~3.7]{KR08}. We remark that $\lambda := -\frac{1}{c} > 0$ was called \textbf{quasi-mean-reversion} of $r$ in \cite{KR08}, with the convention that $\lambda = 0$ if no negative solution $c$ exists. 
The second objective of \cite{KR08} was to characterize all possible shapes of the yield and the forward curve. Recall that in common terminology, the yield or the forward curve is called 
\begin{itemize}
\item \textbf{normal} if it is a strictly increasing function of $x$,
\item \textbf{inverse} if it is a strictly decreasing function of $x$,
\item \textbf{humped} if it has exactly one local maximum and no local minimum in $(0,\infty)$.
\end{itemize}
Finally, we recall the technical condition \cite[Condition~3.1]{KR08}, in slightly rephrased form. The condition is necessary to guarantee finite bond prices, when negative values of the short-rate are allowed.
\begin{cond}\label{cond1}
We assume that $r$ is regular and conservative. If $r$ has state space $D = \RR$, which necessarily implies that $R$ is of the linear form $R(x) = \beta x$ (cf. \cite{DFS03}), we require that 
\[F(x) < \infty \quad \text{for all} \quad x \in \begin{cases}(1/\beta,0] &\quad \text{if $\beta < 0$}\\(-\infty,0], &\quad \text{else}.\end{cases}\] 
\end{cond}

\section{Corrections to results}
Theorem~3.1 in \cite{KR08} should be replaced by the following corrected version:
\begin{thm}\label{thm:yield}
Let the risk-neutral short rate process be given by a one-dimensional affine process $(r_t)_{t \geq 0}$ satisfying Condition~\ref{cond1} and with quasi-mean-reversion $-1/c = \lambda > 0$. In addition suppose that $F \neq 0$ and that at least one of $F$ and $R$ is non-linear. Then the following holds:
\begin{enumerate}
\item The yield curve $Y(.,r_t)$ can only be normal, inverse or humped.
\item Define 
\begin{align*}
\bynorm &:= \frac{1}{c}\int_c^{0} \frac{F(u) - F(c)}{R(u) - 1} du, \qquad \\
\binv &:= \begin{cases}-\frac{F'(0)}{R'(0)} &\quad \text{if $R'(0) < 0$}\\+\infty &\quad \text{if $R'(0) \ge 0$}.\end{cases}
\end{align*}
The yield curve is normal if $r_t \le \bynorm$, humped if $\bynorm < r_t < \binv$ and inverse if $r_t \ge \binv$.
\end{enumerate}
\end{thm}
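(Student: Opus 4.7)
The starting identity is $xY(x,r_t)=-A(x)-r_tB(x)=\int_0^x f(s,r_t)\,ds$, which upon differentiating yields
\[ \partial_x Y(x,r_t)=\frac{M(x)}{x},\qquad M(x):=f(x,r_t)-Y(x,r_t).\]
Thus $Y(\cdot,r_t)$ is normal, inverse, or humped according as $M$ is strictly positive on $(0,\infty)$, strictly negative on $(0,\infty)$, or changes sign exactly once (from $+$ to $-$). A short calculation using \eqref{eq:riccati} gives $(xM(x))'=x\,\partial_xf(x,r_t)$, so
\[ xM(x)=\int_0^x s\,\partial_sf(s,r_t)\,ds,\]
tying the shape of the yield curve to the monotonicity of the forward curve.

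The plan is to classify the shape of $f(\cdot,r_t)$ first. Differentiating \eqref{eq:riccati} once more yields $\partial_xf(x,r_t)=-B'(x)\bigl[F'(B(x))+r_tR'(B(x))\bigr]$. Under Condition~\ref{cond1} the Riccati solution $B$ is smooth and strictly decreases from $0$ toward $c$, with $B'<0$ on $(0,\infty)$. I would then argue that $G(u):=F'(u)+r_tR'(u)$ is non-decreasing on $[c,0]$: for $r_t\ge 0$ this is immediate from the convexity of $F$ and $R$; for $r_t<0$, Condition~\ref{cond1} forces $R$ to be linear, so $R'$ is constant and $G$ inherits the monotonicity of $F'$. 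Since $B$ is decreasing and $G$ is non-decreasing, $x\mapsto G(B(x))$ is non-increasing, so $\partial_xf$ changes sign at most once, and only from $+$ to $-$; hence $f(\cdot,r_t)$ is normal, inverse, or humped. Feeding this back into $(xM)'=x\,\partial_xf$, the function $xM$ is monotone (when $f$ is) or unimodal with a single interior maximum (when $f$ is humped); combined with $xM(0)=0$, the equation $M=0$ has at most one root in $(0,\infty)$, and any critical point of $Y$ is necessarily a local maximum. This establishes part~(1).

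For part~(2) I would split on $r_t$. If $r_t\ge\binv$, the definition of $\binv$ gives $G(0)\le 0$ (vacuously when $R'(0)\ge 0$, where $\binv=+\infty$), and monotonicity of $G$ propagates this to $G\le 0$ throughout $[c,0]$; hence $\partial_xf\le 0$ on $(0,\infty)$, so $f$ is inverse, $M<0$, and $Y$ is inverse. For $r_t<\binv$, $f$ is normal or humped, and the question reduces to whether $xM$ eventually becomes negative. Using the exponential convergence $B(x)-c=O(e^{R'(c)x})$ near the attracting quasi-fixed point $c$ (which also yields $xB'(x)\to 0$) together with the change of variables $u=B(s)$ in $\int_0^\infty[F(B(s))-F(c)]\,ds=\int_0^c\frac{F(u)-F(c)}{R(u)-1}\,du=-c\,\bynorm$, I would compute
\[ \lim_{x\to\infty} xM(x)\;=\;c\,(r_t-\bynorm).\]
Since $c<0$, this limit is non-negative iff $r_t\le\bynorm$. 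Combined with unimodality of $xM$ starting from $xM(0)=0$: for $r_t\le\bynorm$, $xM>0$ on $(0,\infty)$ and $Y$ is normal; for $\bynorm<r_t<\binv$, $xM$ crosses zero exactly once past its maximum and $Y$ is humped.

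The principal obstacle is the asymptotic identity for $\lim_{x\to\infty} xM(x)$: it requires the exponential linearisation of the Riccati flow at $c$, vanishing of the boundary terms $xB'(x)$ and $x[F(B(x))-F(c)]$, and integrability to justify the change of variables. The rest---turning monotonicity of $G$ into the trichotomy for $f$, handling the boundary values $r_t\in\{\bynorm,\binv\}$, and the degenerate case $R'(0)\ge 0$ where $\binv=+\infty$ and the ``inverse'' regime disappears---is essentially bookkeeping.
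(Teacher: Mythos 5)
Your proposal is correct and follows essentially the same route as the paper: your $x\,M(x)$ is exactly the paper's auxiliary function $M$, your identity $(xM)'=x\,\partial_x f=-xB'(x)\bigl[F'(B(x))+r_tR'(B(x))\bigr]$ is the paper's $M'=-xB'k$, and the threshold $\bynorm$ arises from the same limit $\lim_{x\to\infty}xM(x)=c\,(r_t-\bynorm)$ obtained by the same substitution $u=B(s)$, $ds=du/(R(u)-1)$. The one point to tighten is that $k(x)=F'(B(x))+r_tR'(B(x))$ should be recorded as \emph{strictly} decreasing or constant and non-zero (using that at least one of $F$, $R$ is strictly convex and that $F\neq0$), since mere non-increase of $G\circ B$ would not exclude an interval of critical points and hence would not yield the strict monotonicity demanded by the terms normal, inverse and humped.
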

\begin{rem}
The correction only concerns the expression for $\bynorm$, which was called $\bnorm$ in \cite{KR08} and erroneously given as $\bnorm = -F'(c) / R'(c)$. All other parts of the theorem are the same as in \cite[Thm.~3.1]{KR08}.
\end{rem}
Corollary~3.11 in \cite{KR08} should be replaced by the following result:
\begin{thm}\label{thm:fw}
Define $\binv$ as in Thm.~\ref{thm:yield} and set
\[\bfwnorm := -\frac{F'(c)}{R'(c)}.\]
Under the conditions of Theorem~\ref{thm:yield} the following holds:
\begin{enumerate}
\item The forward curve $f(.,r_t)$ can only be normal, inverse or humped.
\item The forward curve is normal if $r_t \le b_\text{fw-norm}$, humped if $b_\text{fw-norm} < r_t < b_\text{inv}$ and inverse if $r_t \ge b_\text{inv}$.
\end{enumerate}
\end{thm}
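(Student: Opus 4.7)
The plan is to study the forward curve through its $x$-derivative and reduce the question of its shape to the monotonicity of a one-variable function of $B(x)$ alone. Using the Riccati system \eqref{eq:riccati},
\[f(x,r_t) = -F(B(x)) - r_t\bigl(R(B(x))-1\bigr),\]
so differentiating in $x$,
\[\partial_x f(x,r_t) = -B'(x)\,h(B(x)), \qquad h(u) := F'(u) + r_t R'(u).\]
First I would record the relevant properties of $B$: the ODE $B'=R(B)-1$ with $B(0)=0$, combined with $R(0)=0$, $R(c)=1$ and convexity of $R$, forces $R<1$ on $(c,0)$, so that $B$ is a strictly decreasing $C^1$-diffeomorphism from $[0,\infty)$ onto $(c,0]$ with $\lim_{x\to\infty}B(x)=c$. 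Consequently $-B'(x)>0$ for $x>0$, and the sign of $\partial_x f(x,r_t)$ is the same as that of $h(B(x))$.

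The decisive step is the monotonicity of $h$. I would differentiate, obtaining $h'=F''+r_tR''$, and use that $F,R$ are convex Lévy–Khintchine exponents: in the state space $D=\Rplus$ one has $r_t\geq 0$, making both summands non-negative; in $D=\RR$ Condition~\ref{cond1} forces $R(x)=\beta x$, so that $R''\equiv 0$ and $h'=F''\geq 0$. In either case $h$ is non-decreasing on $(c,0)$, and since $B$ is strictly decreasing, $x\mapsto h(B(x))$ is non-increasing on $(0,\infty)$. The standing non-degeneracy hypothesis ($F\neq 0$ and at least one of $F,R$ non-linear) rules out that $\partial_x f$ vanishes on an entire interval, so $\partial_x f$ changes sign at most once, and only from $+$ to $-$; this yields part (1).

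For part (2), I would identify the signs of $h$ at the two endpoints of $(c,0)$. Convexity of $R$ applied to the chord from $(c,1)$ to $(0,0)$ gives $R'(c)\leq 1/c<0$, so
\[h(c)\geq 0 \quad\Longleftrightarrow\quad r_t\leq -F'(c)/R'(c) = \bfwnorm.\]
At $u=0$, distinguishing the two cases in the definition of $\binv$ yields $h(0)\leq 0 \Longleftrightarrow r_t\geq \binv$: if $R'(0)<0$ this is direct, and the convention $\binv=+\infty$ for $R'(0)\geq 0$ is consistent with $h(0)\geq F'(0)\geq 0$ in that regime (using that $F$ is of Lévy–Khintchine form on $\Rplus$; note that $D=\RR$ automatically yields $R'(0)=\beta=1/c<0$, so $\binv$ is finite in that case). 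Monotonicity of $h\circ B$ then gives: normal $\Longleftrightarrow h\geq 0$ on $(c,0]$ $\Longleftrightarrow r_t\leq \bfwnorm$; inverse $\Longleftrightarrow h\leq 0$ on $[c,0)$ $\Longleftrightarrow r_t\geq \binv$; humped in the complementary strip $\bfwnorm<r_t<\binv$.

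The main technical point I anticipate requiring care is the boundary behaviour: ensuring the weak/strict inequalities on $h$ at $u=c$ and $u=0$ translate cleanly into the declared strict shape of the curve (strictly increasing, strictly decreasing, or strictly unimodal). This will need the non-degeneracy hypothesis to exclude plateaus of $h$, together with a careful case split between $D=\Rplus$ and $D=\RR$ (and, within the former, between $R'(0)<0$ and $R'(0)\geq 0$) when reading off $\binv$.
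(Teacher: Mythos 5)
Your proposal is correct and follows essentially the same route as the paper's proof: factor $\partial_x f(x,r_t) = -B'(x)\cdot\bigl(F'(B(x)) + r_t R'(B(x))\bigr)$, show the second factor is monotone decreasing in $x$ (or constant and non-zero), and read off the sign sequence from its values at $u=0$ and $u=c$, which yields exactly the thresholds $\bfwnorm$ and $\binv$. The only point to tighten is the case $R'(0)\ge 0$, where you need the strict inequality $F'(0)>0$ (which the paper records as property (P3) for $D=\Rplus$) rather than $F'(0)\ge 0$, so that $h(0)>0$ and the "inverse" alternative is genuinely excluded.
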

\begin{rem}
We have intentionally renamed the result from Corollary to Theorem, since the correction changes the logical structure of the proof. Note that the above result is equivalent to \cite[Cor.~3.11]{KR08} up to the notational change from $\bnorm$ to $\bfwnorm$. Note that $\bynorm \neq \bfwnorm$ in general, while in \cite{KR08} it was erroneously claimed that $\bynorm = \bfwnorm$. 
\end{rem}
Corollary~3.12 in \cite{KR08} should be replaced by the following result:
\begin{cor}\label{cor:order}Under the conditions of Theorem~\ref{thm:yield} it holds that
\begin{equation}\label{eq:serial_ineq}
\bfwnorm < \bynorm < \basymp < \binv.
\end{equation}
In addition, the state space $D$ of the short rate process satisfies
\[D \cap (\bynorm, \binv) \neq \emptyset.\]
\end{cor}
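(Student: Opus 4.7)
The plan is to use two tools: an evaluation at $r_t = \basymp$ to pinch both thresholds of each curve, and an explicit integral identity to separate $\bfwnorm$ from $\bynorm$. Recall that $f(0,r_t) = r_t = Y(0+,r_t)$ while $\lim_{x\to\infty} f(x,r_t) = \basymp = \lim_{x\to\infty} Y(x,r_t)$, so at $r_t = \basymp$ both curves begin and end at the same value $\basymp$ and cannot be strictly monotone. Theorem~\ref{thm:fw} thus forces $f(\cdot,\basymp)$ to be humped and gives $\bfwnorm < \basymp < \binv$, while the analogous argument via Theorem~\ref{thm:yield} yields $\bynorm < \basymp < \binv$.

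For the remaining inequality $\bfwnorm < \bynorm$ I would derive the identity
\[
\int_0^\infty \bigl(f(x,r_t) - \basymp\bigr)\,dx \;=\; c\,(\bynorm - r_t)
\]
by substituting $u = B(x)$, using $du = (R(B(x)) - 1)\,dx$ together with $f(x,r_t) - \basymp = -(F(B(x)) - F(c)) - r_t(R(B(x)) - 1)$; the transformed left-hand side equals $\int_c^0 \frac{F(u)-F(c)}{R(u)-1}\,du - r_t c$, which is $c(\bynorm - r_t)$ by the very definition of $\bynorm$. Both sides are finite, since the transformed integrand extends continuously to $u = c$ (the apparent singularity being removable as numerator and denominator vanish to first order), and the improper integral on the left converges because $B(x) - c$ decays exponentially. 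Specializing to $r_t = \bfwnorm$, Theorem~\ref{thm:fw} asserts that $f(\cdot,\bfwnorm)$ is normal, i.e.\ strictly increasing to $\basymp$, so $f - \basymp < 0$ on $[0,\infty)$ and the integral is strictly negative; combined with $c < 0$ this forces $\bynorm > \bfwnorm$.

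The non-emptiness of $D \cap (\bynorm,\binv)$ is immediate when $D = \RR$. When $D = \Rplus$, the admissibility parametrization of affine processes on $\Rplus$ (cf.~\cite{DFS03}) forces $F(u) \le 0$ for $u \le 0$ and $F'(0) \ge 0$; combined with the hypothesis $F \neq 0$ this upgrades to $F'(0) > 0$, and in the only nontrivial case $R'(0) < 0$ one then obtains $\binv = -F'(0)/R'(0) > 0$, so $(\max\{0,\bynorm\},\binv)$ is a non-empty subset of $D \cap (\bynorm,\binv)$.

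The main obstacle is precisely the strict inequality $\bfwnorm < \bynorm$, since equality was erroneously claimed in \cite{KR08}. The asymmetry arises because $Y$ averages past values of $f$ and therefore lags: once $f$ loses monotonicity at $r_t = \bfwnorm$ there remains a range of $r_t$ for which the cumulative shortfall of $f$ relative to $\basymp$ keeps $Y$ monotone, and the integral identity above is what quantifies this lag and converts the shape statement into a sign statement about a definite integral.
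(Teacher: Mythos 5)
Your argument is essentially correct but takes a genuinely different route from the paper. The paper proves the whole chain \eqref{eq:serial_ineq} "statically", by sandwiching the integrand of $\bynorm$ between its tangent and secant slopes: convexity gives $-F'(c)/R'(c) < -\frac{F(u)-F(c)}{R(u)-1} < -F(c) < -F'(0)/R'(0)$ pointwise on $(c,0)$, and averaging with $\frac{1}{c}\int_0^c du$ yields \eqref{eq:serial_ineq} in one stroke. You instead reuse Theorems~\ref{thm:yield} and \ref{thm:fw} as black boxes: evaluating at $r_t=\basymp$, where both curves start and end at the same level and hence must be humped, pins $\basymp$ strictly between each curve's two thresholds; and the identity $\int_0^\infty(f(x,r_t)-\basymp)\,dx = c(\bynorm - r_t)$ (which is correct --- the substitution $u=B(x)$ checks out, the singularity at $u=c$ is removable since $R'(c)<0$) converts the normality of $f(\cdot,\bfwnorm)$ into the strict inequality $\bfwnorm<\bynorm$. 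This is attractive: it isolates exactly why the two thresholds differ (the yield curve averages the forward curve and therefore lags), whereas the paper's computation, while shorter and more self-contained, leaves that mechanism implicit. The treatment of $D\cap(\bynorm,\binv)\neq\emptyset$ is essentially the paper's.

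There is, however, one gap you need to close. Theorems~\ref{thm:yield} and \ref{thm:fw} classify the curve shapes only for $r_t$ in the state space $D$: in the case $D=\Rplus$ their proofs use $r_t\ge 0$ to conclude that $k(x)=F'(B(x))+r_tR'(B(x))$ is monotone, and for $r_t<0$ the classification can genuinely fail. Your argument evaluates the theorems at $r_t=\basymp$ and at $r_t=\bfwnorm$, so when $D=\Rplus$ you must verify $\basymp\ge 0$ and $\bfwnorm\ge 0$. Both are true, but neither follows from (P1)--(P4) alone: you need $F(c)\le 0$ (so that $\basymp=-F(c)\ge 0$) and $F'(c)\ge 0$ (so that $\bfwnorm=-F'(c)/R'(c)\ge 0$, using $R'(c)<0$), which come from the L\'evy--Khintchine representation of $F$ for an affine process on $\Rplus$, namely $F(u)=bu+\int_{(0,\infty)}(e^{u\xi}-1)\,m(d\xi)$ with $b\ge 0$ and $m$ supported on $(0,\infty)$, whence $F\le 0$ and $F'\ge 0$ on $(-\infty,0]$. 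You invoke the first of these facts only in your final paragraph and for a different purpose, and you never address $F'(c)\ge 0$ at all; without it, the application of Theorem~\ref{thm:fw} at $r_t=\bfwnorm$ --- the crux of your proof of $\bfwnorm<\bynorm$ --- is not justified. Adding these two observations makes the proof complete.
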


The error also affects \cite[Figure~1]{KR08}, where the expression for $\bnorm$ should be replaced by the correct value of $\bynorm$. It also affects the application section \cite[Section~4]{KR08}, where the values of $\bnorm$ and $\binv$ are calculated in different models. The corrections to \cite[Section~4]{KR08} are as follows:\\

In the \textbf{Vasicek model}, the short rate is given by
\[dr_t = -\lambda (r_t - \theta)\,dt + \sigma \,dW_t, \quad r_0 \in \RR,\]
with $\lambda, \theta, \sigma > 0$. This leads to the parameterization
\begin{align}
F(u) &= \lambda \theta u  + \frac{\sigma^2}{2}u^2,\\
R(u) &= -\lambda u.
\end{align}
By direct calculation we obtain
\begin{align}
\bynorm &= \theta - \frac{3 \sigma^2}{4 \lambda^2},\\
\bfwnorm &= \theta - \frac{\sigma^2}{\lambda^2}.
\end{align}
Note that the value of $\bynorm$ is now consistent with the results of \cite[p186]{V77}.\\

In the \textbf{Cox-Ingersoll-Ross model}, the short rate is given by 
\[r_t = - a(r_t - \theta)\,dt + \sigma \sqrt{r_t} \,dW_t, \quad r_0 \in \Rplus,\]
with $a, \theta, \sigma > 0$. This leads to the parameterization
\begin{align}
F(u) &= a \theta u\\
R(u) &= - \frac{\sigma^2}{2}u^2 - au.
\end{align}
By direct calculation we obtain
\begin{align}
\bynorm &= \frac{2 a \theta}{\gamma - a} \log \left(\frac{2\gamma}{a + \gamma}\right)\\
\bfwnorm &= \frac{a \theta}{\gamma}
\end{align}
where $\gamma := \sqrt{2\sigma^2 + a^2}$.\\

In the \textbf{gamma model}, the short rate is given by an Ornstein-Uhlenbeck-type process, driven by a compound Poisson process with intensity $\lambda k$ and exponentially distributed jump heights of mean $1/\theta$, see \cite[Sec.~4.4]{KR08} for details. In this model, we have
\begin{align}
F(u) &= \frac{\lambda \theta k u}{1 - \theta u}\\
R(u) &= - \lambda u.
\end{align}
and by direct calculation we obtain
\begin{align}
\bynorm &= \frac{k \lambda}{1 + \theta/\lambda} \log\left(1 + \theta/\lambda\right)\\
\bfwnorm &= \frac{k \theta}{(1 + \theta/\lambda)^2}.
\end{align}

Since the resulting expressions are quite involved, we omit the calculations for the extended CIR model \cite[Eq.~(4.7)]{KR08}.

\section{Corrected proofs}
To prepare for the corrected proofs, we collect the following properties from \cite[Sec.~2 and 3.1]{KR08}, which hold for the functions $F$, $R$, $B$ and for the state space $D$ under the assumptions of Theorem~\ref{thm:yield}:
\begin{enumerate}[(P1)]
\item $F$ is either strictly convex or linear; the same holds for $R$. Both functions are continuously differentiable on the interior of their effective domain.
\item The function $B$ is strictly decreasing with limit $\lim_{x \to \infty} B(x) = c$.
\item $F(0) = R(0) = 0$ and $R'(c) < 0$. In addition, $F'(0) >0$ if $D = \Rplus$.
\item Either \begin{enumerate}[(a)]
\item $D = \Rplus$; or
\item $D = \RR$ and $R(u) = u/c$ with $c < 0$.
\end{enumerate}
\end{enumerate}
Note that Theorem~\ref{thm:yield} assumes that at least one of $F$ and $R$ is non-linear. Together with (P1) this implies the following: 
\begin{enumerate}[(P1')]
\item At least one of $F$ and $R$ is strictly convex.
\end{enumerate}

In addition, we introduce the following terminology: Let $f: (0,\infty) \to \RR$ be a continuous function. The \emph{zero set} of $f$ is $Z := \set{x \in (0,\infty): f(x) = 0}$. The \emph{sign sequence} of $Z$ is the sequence of signs $\set{+,-}$ that $f$ takes on the complement of $Z$, ordered by the natural order on $\RR$. For example, the function $x^2 - 1$ on $(0,\infty)$ has the finite sign sequence $(-+)$; the function $\sin(x)$ has the infinite sign sequence $(+-+-\dotsm)$. 
An obvious, but important property is the following: Let $g: (0,\infty) \to (0,\infty)$ be a \emph{positive} continuous function. Then $fg$ has the same zero set and the same sign sequence as $f$.

\begin{proof}[Proof of Theorem~\ref{thm:fw}]
From the Riccati equations \eqref{eq:riccati}, we can write the derivative of the forward curve as
\begin{equation}\label{eq:k}
\partial_x f(x,r_t) = - B'(x) \cdot \underbrace{\left\{F'(B(x)) + r_t R'(B(x))\right\}}_{:=k(x)}.
\end{equation}
Note that by (P2) the factor $-B'(x)$ is strictly positive, and hence $\partial_x f$ has the same sign sequence as $k$. We distinguish cases (a) and (b) as in (P4): 
\begin{enumerate}[(a)]
\item Assume that $r_t \in D = \Rplus$. By (P2) $B(x)$ is strictly decreasing and by (P1') either $F'$ or $R'$ is strictly increasing. Thus, if $r_t > 0$, it follows that $k(x)$ is a strictly decreasing function. If $r_t = 0$, then $k$ is either strictly decreasing (if $F'$ is strictly convex) or $k$ is constant (if $F$ is linear). By (P1) these are the only possibilities. In addition, the case $F = 0$ is ruled out by the assumptions.
\item Assume that $r_t \in D = \RR$. In this case $R(u) = u/c$ and hence $R'(u) = 1/c$ is constant and $F'$ is strictly increasing, by (P1'). We conclude that $k$ is strictly decreasing. 
\end{enumerate}
In any case, $k$ is either strictly decreasing or constant and non-zero. Thus the sign sequence of $k$ can be completely characterized by its initial value $k(0)$ and its asymptotic limit as $x$ tends to infinity. Let us first show that 
\begin{equation}\label{eq:k0}
k(0) \le 0  \quad \Longleftrightarrow \quad r_t \ge \binv  =  \begin{cases}-\frac{F'(0)}{R'(0)} &\quad \text{if $R'(0) < 0$}\\+\infty &\quad \text{if $R'(0) \ge 0$}.\end{cases}
\end{equation}
We have $k(0) = F'(0) + r_t R'(0)$, such that the assertion follows immediately if $R'(0) < 0$. Consider the complementary case $R'(0) \ge 0$. This rules out case (b) in (P4) and hence we may assume that $D = \Rplus$. Since $F'(0) > 0$ by (P3), \eqref{eq:k0} follows. Next we show that 
\begin{equation}\label{eq:kinfty}
 \lim_{x \to \infty }k(x) \ge 0  \quad \Longleftrightarrow \quad  r_t \le \bfwnorm = -\frac{F'(c)}{R'(c)} .
\end{equation}
This follows immediately from $\lim_{x \to \infty}k(x) = F'(c) + r_t R'(c)$ and $R'(c) <  0$, by (P3).
Combining \eqref{eq:k0} with \eqref{eq:kinfty}, and using that $k$ is either strictly decreasing or constant and non-zero we obtain
\begin{equation}\label{eq:rprop}
\begin{aligned}
r_t \ge \binv \quad & \Longleftrightarrow \quad  k \text{ has sign sequence } (-) \\
r_t \le \bfwnorm \quad & \Longleftrightarrow \quad k \text{ has sign sequence } (+) \\
r_t \in (\bfwnorm, \binv) \quad & \Longleftrightarrow \quad k \text{ has sign sequence } (+-). 
\end{aligned}
\end{equation}
Since $\partial_x f$ has the same sign sequence as $k$, these statements can be directly translated into monotonicity properties of $f$: In the first case the forward curve $f$ is strictly decreasing, i.e. inverse; in the second case it is strictly increasing, i.e. normal. In the third case it is strictly increasing up to the unique zero of $k$ and then strictly decreasing, i.e. humped. No other cases are possible. 
\end{proof}

\begin{proof}[Proof of Theorem~\ref{thm:yield}]
From the Riccati equations \eqref{eq:riccati}, we can write the derivative of the yield curve as
\[\partial_x Y(x,r_t) = \frac{1}{x^2} (A(x) + r_t B(x)) - \frac{1}{x} \left\{F(B(x)) + r_t \left[R(B(x)) - 1\right]\right\}.\]
Multiplying by the positive function $x^2$ we see that  $\partial_x Y(x,r_t)$ has the same zero set and the same sign sequence as
\[M(x) := \left[A(x) - x F(B(x))\right] + r_t \left\{B(x) - x \left[R(B(x)) - 1\right]\right\}.\]
The derivative of $M$ is given by
\[M'(x) := -x B'(x) \cdot \left\{F'(B(x)) + r_t R'(B(x))\right\} = - xB'(x) \cdot k(x),\]
with $k$ as in \eqref{eq:k}. Note that by (P2) the factor $-x B'(x)$ is strictly positive, and hence $M'$ has the same sign sequence as $k$, which was already analyzed in \eqref{eq:rprop}. Since $M(0) = 0$, we can conclude that 
\begin{equation}\label{eq:Msign}
\begin{aligned}
r_t \ge \binv \quad & \Longrightarrow \quad M \text{ has sign sequence } (-) \\
r_t \le \bfwnorm \quad & \Longrightarrow \quad M \text{ has sign sequence } (+) \\
r_t \in (\bfwnorm, \binv) \quad & \Longrightarrow \quad M \text{ has sign sequence } (+-) \text{ or } (+).
\end{aligned}
\end{equation}
\textbf{Essentially, the mistake in \cite{KR08} was to ignore the possible sign sequence $(+)$ in the third case.} Not repeating the same mistake, we take a closer look at the third case and note that the sign sequence of $M$ is $(+-)$ if and only if
\begin{equation}\label{eq:Mneg}
\lim_{x \to \infty} M(x)  < 0.
\end{equation}
Decomposing $M(x) = L_1(x) + r_t L_2(x)$ it remains to study the asymptotic properties of $L_1$ and $L_2$. We have
\begin{align*}
L_1(x) &= A(x) - x F(B(x)) = \int_0^x \left(F(B(s)) - F(B(x))\right)ds = \\
&= \int_0^{B(x)} \frac{F(u) - F(B(x))}{R(u) - 1}du \quad \xrightarrow{x \to \infty} \quad \int_0^c \frac{F(u) - F(c)}{R(u) - 1} du.
\end{align*}
In addition
\begin{align*}
L_2(x) &= B(x) - x \left[R(B(x)) - 1\right] = \int_0^x \left(R(B(s)) - R(B(x))\right)ds = \\
&= \int_0^{B(x)} \frac{R(u) - R(B(x))}{R(u) - 1}du \quad \xrightarrow{x \to \infty} \quad \int_0^c \frac{R(u) - 1}{R(u) - 1} du = c.
\end{align*}
Since $c < 0$, we conclude that 
\begin{equation}\label{eq:Minfty}
 \lim_{x \to \infty }M(x) < 0  \quad \Longleftrightarrow \quad  r_t > \bynorm =  \frac{1}{c}\int_c^{0} \frac{F(u) - F(c)}{R(u) - 1} du.
\end{equation}
By convexity of $F$ and $R$ and using that $c < 0$ we observe that
\begin{equation*}
\bynorm = \frac{1}{c}\int_c^{0} \frac{F(u) - F(c)}{R(u) - 1} du  \ge  \frac{1}{c}\int_c^{0} \frac{F'(c)}{R'(c)} du = - \frac{F'(c)}{R'(c)} = \bfwnorm.
\end{equation*}
Together with \eqref{eq:Msign} this completes the proof.
\end{proof}

\begin{proof}[Proof of Corollary~\ref{cor:order}] Recall that $R(c) = 1$ and $c < 0$. By convexity of $F$ and $R$ we have
\begin{equation}\label{eq:ineq}
\begin{aligned}
F'(c) &\le \frac{F(u) - F(c)}{u-c} \le \frac{F(c)}{c} \le F'(0)\\
R'(c) &\le \frac{R(u) - 1}{u-c} \le \frac{1}{c} \le R'(0).
\end{aligned}
\end{equation}
for all $u \in (c,0)$. Note that by (P1') either $F$ or $R$ is strictly convex, such that strict inequalities must hold in either the first or the second line. If $R'(0) < 0$, then applying the strictly increasing transformation $x \mapsto -\frac{1}{x}$ to the second line in \eqref{eq:ineq} and multiplying term-by-term with the first, we obtain
\begin{equation}
-\frac{F'(c)}{R'(c)} < -\frac{F(u) - F(c)}{R(u)-1} < -\frac{F(c)}{c} < -\frac{F'(0)}{R'(0)}.
\end{equation}
Applying the integral $\frac{1}{c}\int_0^c \,du$ to all terms, \eqref{eq:serial_ineq} follows. If $R'(0) \ge 0$, this approach is still valid for the first two inequalities in each line of \eqref{eq:ineq}, but not for the last one. However, in the case $R'(0) \ge 0$ we have set $\binv = +\infty$ in \eqref{eq:k0}, and the last inequality in \eqref{eq:serial_ineq} holds trivially. 
It remains to show that $D \cap (\bynorm, \binv)$ is non-empty. $F$ is a convex function and by Condition~\ref{cond1} finite at least on the interval $(c,0)$. It follows that $F'(0) > -\infty$ and thus that $\binv > -\infty$ in general. If $D = \Rplus$ then $F'(0) > 0$ by (P3) and hence $\binv > 0$. Moreover $\bynorm < \basymp = -F(c) < \infty$, completing the proof.
 \end{proof}

\bibliographystyle{alpha}
\bibliography{references}
\end{document}